\newtheorem{proposition}{Proposition}
\begin{document}

\title{Frequency-aware Graph Signal Processing for Collaborative Filtering}

\author{Jiafeng Xia}
\affiliation{
  \institution{Fudan University}
  \city{Shanghai}
  \country{China}
}
\additionalaffiliation{
    \institution{School of Computer Science, Shanghai Key Laboratory of Data Science, Fudan University}
  \city{Shanghai}
  \country{China}
}
\email{jfxia19@fudan.edu.cn}

\author{Dongsheng Li}
\affiliation{
  \institution{Microsoft Research Asia}
  \city{Shanghai}
  \country{China}
 }
 \email{dongsli@microsoft.com}

\author{Hansu Gu}
\affiliation{
  \city{Seattle}
  \country{United States}
}
\email{hansug@acm.org}

\author{Tun Lu}
\authornotemark[1]
\affiliation{
  \institution{Fudan University}
  \city{Shanghai}
  \country{China}
}
\additionalaffiliation{
    \institution{Fudan Institute on Aging, MOE Laboratory for National Development and Intelligent Governance, and Shanghai Institute of Intelligent Electronics \& Systems, Fudan University}
  \city{Shanghai}
  \country{China}
}
\authornote{Corresponding author.}
\email{lutun@fudan.edu.cn}

\author{Peng Zhang}
\authornotemark[1]
\authornotemark[3]
\affiliation{
  \institution{Fudan University}
  \city{Shanghai}
  \country{China}}
\email{zhangpeng_@fudan.edu.cn	}

\author{Li Shang}
\authornotemark[1]
\affiliation{
  \institution{Fudan University}
  \city{Shanghai}
  \country{China}}
\email{lishang@fudan.edu.cn}

\author{Ning Gu}
\authornotemark[1]
\affiliation{
  \institution{Fudan University}
  \city{Shanghai}
  \country{China}}
\email{ninggu@fudan.edu.cn}

\newcommand{\ours}[0]{FaGSP\xspace}

\begin{abstract}
Graph Signal Processing (GSP) based recommendation algorithms have recently attracted lots of attention due to its high efficiency. However, these methods failed to consider the importance of various interactions that reflect unique user/item characteristics and failed to utilize user and item high-order neighborhood information to model user preference, thus leading to  sub-optimal performance. To address the above issues, we propose a frequency-aware graph signal processing method (\ours) for collaborative filtering. Firstly, we design a Cascaded Filter Module, consisting of an ideal high-pass filter and an ideal low-pass filter that work in a successive manner, to capture both unique and common user/item characteristics to more accurately model user preference. Then, we devise a Parallel Filter Module, consisting of two low-pass filters that can easily capture the hierarchy of neighborhood, to fully utilize high-order neighborhood information of users/items for more accurate user preference modeling. Finally, we combine these two modules via a linear model to further improve recommendation accuracy. Extensive experiments on six public datasets demonstrate the superiority of our method from the perspectives of prediction accuracy and training efficiency compared with state-of-the-art GCN-based recommendation methods and GSP-based recommendation methods.
\end{abstract}

\begin{CCSXML}
<ccs2012>
   <concept>
       <concept_id>10002951.10003317.10003347.10003350</concept_id>
       <concept_desc>Information systems~Recommender systems</concept_desc>
       <concept_significance>500</concept_significance>
       </concept>
 </ccs2012>
\end{CCSXML}

\ccsdesc[500]{Information systems~Recommender systems}

\keywords{collaborative filtering, graph signal processing, frequency-aware}

\maketitle

\section{Introduction}

Graph Signal Processing (GSP)~\cite{gsp1, gsp2, gsp3} extends signal processing techniques to graph data, which first transforms signals into frequency domain and then processes specific frequency components to extract and analyze information existed in the signal based on spectral graph theory~\cite{sgt1, sgt2}. Currently, GSP-based recommendation algorithms~\cite{gfcf, pgsp, cfgsp} are receiving increasing attention from researchers due to its parameter-free characteristic. Compared with graph-based collaborative filtering methods, these non-parametric methods do not suffer from a time-consuming training phase and thus are highly efficient, while can achieve comparable or better  performance than deep learning-based methods~\cite{lightgcn, simplex, sgled, ngat4rec, niagcn}.

Generally, user interactions contain rich information that depicts user interests and item characteristics. Through designing and exerting different types of filters on user interactions, we can extract different types of information to  model user preference, thereby improving the accuracy of user future interaction prediction. GF-CF~\cite{gfcf} and PGSP~\cite{pgsp} are two representative GSP-based collaborative filtering methods, both of which adopt an ideal low-pass filter and a linear filter to extract information from user/item common characteristic and user and item first-order neighborhood respectively. Though they have shown promising performance, two limitations restrict their expressivity in modeling user preference.

{The first limitation is that they focus on user/item common characteristics and neglect their unique characteristics.} Common characteristics refer to the shared features among many users/items, such as a broad target audience aged 18-60, 
while the unique characteristics refer to those characteristics that can distinguish a user/item from others, such as a narrow target audience aged 18-25, 
both of which are implicit in user interactions. When they use the ideal low-pass filters to predict user future interactions, merely common characteristics will be utilized for prediction, while unique characteristics will be filtered out. Obviously, in this way, they can only roughly recommend items to users based on common characteristics, making the predictions sub-optimal. { The second limitation is that they fail to fully utilize user and item high-order neighborhood information.} The powerful modeling capacity of GCN~\cite{gcn, gin} mainly benefits from aggregating information from both direct and distant neighbors. By analogy, we can infer that user and item high-order neighborhood information, which is obtained by aggregating information from their neighbors of different distances, can also be crucial to model user preference, so as to extract richer information for interaction prediction. However, linear filters, designed on the user/item co-occurrence relationship, can only extract information from first-order neighborhood, leading to insufficient modeling of user preference and sub-optimal performance. 

In this work, we propose a frequency-aware graph signal processing method (\ours) for collaborative filtering to address the above two limitations. Firstly, we design a Cascaded Filter Module to take both common characteristics and  unique characteristics into consideration for interaction prediction, which first uses an ideal high-pass filter to enhance interaction signal by highlighting those interactions that reflect unique characteristics, then uses an ideal low-pass filter over the enhanced signal to predict future interactions. Secondly, we devise a Parallel Filter Module consisting of two low-pass filters, which are designed to capture user and item high-order neighborhood information respectively for user preference modeling. By adjusting the parameters of these two filters, they are capable of capturing information from neighborhood with any order. Finally, we combine these two modules via a linear model to make \ours able to extract rich information from user interactions for user preference modeling and interaction prediction. We conduct extensive experiments on six real-world datasets, and the experimental results demonstrate the superiority of our method from the perspectives of recommendation accuracy and training efficiency compared with existing GCN/GSP-based collaborative filtering methods. 

\section{Preliminaries}
\subsection{Graph Signal Processing}
\subsubsection{Graph Signal}
Given an undirected and unweighted graph $\mathcal{G}=(\mathcal{N}, \mathcal{E})$, where $\mathcal{N}$ and $\mathcal{E}$ represent the node set and edge set respectively, and $|\mathcal{N}|=n$. The graph structure can be represented as an adjacency matrix $\mathbf{A}\in\mathbb{R}^{n\times n}$, and if there is an edge between node $i$ and node $j$, $\mathbf{A}_{ij}=1$, otherwise, $\mathbf{A}_{ij}=0$. The graph laplacian matrix, defined over $\mathcal{G}$, can be represented as $\mathbf{L}=\mathbf{D}-\mathbf{A}$, where $\mathbf{D}=diag(\mathbf{A}\cdot \mathbf{1})\in \mathbb{R}^{n\times n}$ is the degree matrix, and the corresponding normalized laplacian matrix is $\tilde{\mathbf{L}}=\mathbf{D}^{-\frac{1}{2}}\mathbf{L}\mathbf{D}^{-\frac{1}{2}}$. As the normalized graph laplacian matrix is a real and symmetric matrix, it can be decomposed into ${\tilde{\mathbf{L}}}=\mathbf{U}\pmb\Lambda\mathbf{U}^T$, where $\pmb\Lambda=diag(\lambda_1, \lambda_2, \cdots,\lambda_n)$ and $\mathbf{U}=(\mathbf{u}_1, \mathbf{u}_2,\cdots,\mathbf{u}_n)$ are the eigenvalue matrix and eigenvector matrix respectively, and $0\le\lambda_1\le \lambda_2\le \cdots\le \lambda_n\le 2$. The graph signal can be defined by a mapping $h:\mathcal{N}\rightarrow \mathbb{R}$, and it can be represented as a vector $\mathbf{x}=\{x_1, x_2, \cdots, x_n\}$, where $x_i$ is the signal strength of the $i$-th node. The graph quadratic form is widely used to measure the smoothness of the graph signal, which is defined as 
\begin{equation}
S(\mathbf{x})=\frac{\mathbf{x}^T{\mathbf{L}}\mathbf{x}}{||\mathbf{x}||_2}.\label{eq:sx}
\end{equation}
If $S(\mathbf{x_1})<S(\mathbf{x_2})$, then the graph signal $\mathbf{x}_1$ is smoother than $\mathbf{x}_2$.

\subsubsection{Graph Filter}
GSP uses graph filter $\mathcal{F}$ to analyse graph signal:
\begin{equation}
\mathcal{F} = \mathbf{U}diag(f(\lambda_1), f(\lambda_2), \cdots,f(\lambda_n))\mathbf{U}^T=\mathbf{U}\mathbf{F}\mathbf{U}^T,\label{eq:filter}
\end{equation}
where $\mathbf{U}$ is the eigenvector matrix of normalized laplacian matrix $\tilde{\mathbf{L}}$. $f(\cdot)$ is a frequency response function that determines whether some frequency component $\lambda_i$ is enhanced or attenuated. Different design of $f(\cdot)$ will lead to the different graph filter $\mathcal{F}$, if $f(\cdot)$ is a monotonic decreasing function, then $\mathcal{F}$ is a low-pass filter, and if $f(\cdot)$ is a monotonic increasing function, then $\mathcal{F}$ is a high-pass filter. Low-pass filter and high-pass filter have different effects to the graph signal $\mathbf{x}$, as shown in the Proposition~\ref{prop:prop1}.

\begin{proposition}\label{prop:prop1}
Low-pass filter can smooth the graph signal $\mathbf{x}$, i.e., reduce $S(\mathbf{x})$, while high-pass filter can coarsen the graph signal $\mathbf{x}$, i.e., enhance $S(\mathbf{x})$.
\end{proposition}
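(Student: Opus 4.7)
The plan is to diagonalize the quadratic form using the spectral decomposition of the normalized Laplacian. Writing $\mathbf{x} = \sum_{i} \alpha_i \mathbf{u}_i$ with $\alpha_i = \mathbf{u}_i^T\mathbf{x}$, one obtains $\mathbf{x}^T\tilde{\mathbf{L}}\mathbf{x} = \sum_i \lambda_i \alpha_i^2$ and $\|\mathbf{x}\|_2^2 = \sum_i \alpha_i^2$, so the (Rayleigh-quotient) smoothness $S(\mathbf{x})$ is, up to a positive scaling, the $\{\alpha_i^2\}$-weighted mean of the Laplacian eigenvalues $\lambda_i$. Applying the filter via~(\ref{eq:filter}) gives $\mathbf{y} := \mathcal{F}\mathbf{x} = \sum_i f(\lambda_i)\alpha_i \mathbf{u}_i$, which simply reweights each spectral component: the weight on $\lambda_i$ changes from $\alpha_i^2$ to $f(\lambda_i)^2\alpha_i^2$.

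The proposition then reduces to a rearrangement-type statement: if $f$ is monotonically decreasing and nonnegative, reweighting by $f(\lambda_i)^2$ shifts mass toward smaller eigenvalues, so the weighted mean of $\lambda_i$ can only decrease; the high-pass case is the mirror image. After cross-multiplying and symmetrizing over the index pair $(i,j)$, the inequality $S(\mathcal{F}\mathbf{x}) \le S(\mathbf{x})$ reduces to the Chebyshev-type sum inequality
\begin{equation*}
\sum_{i,j}\bigl(f(\lambda_i)^2 - f(\lambda_j)^2\bigr)(\lambda_i - \lambda_j)\,\alpha_i^2\alpha_j^2 \;\le\; 0.
\end{equation*}
Whenever $f\ge 0$ is monotonically decreasing, $\lambda\mapsto f(\lambda)^2$ is also nonincreasing, so the two factors $(f(\lambda_i)^2-f(\lambda_j)^2)$ and $(\lambda_i-\lambda_j)$ have opposite signs (or one vanishes), and the inequality holds term-by-term. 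The high-pass case is identical with both factors agreeing in sign, giving the reverse inequality.

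The main obstacle is definitional rather than mathematical: the denominator in~(\ref{eq:sx}) is written as $\|\mathbf{x}\|_2$ rather than $\|\mathbf{x}\|_2^2$, which makes $S$ scale-dependent and therefore hypersensitive to the overall multiplicative gain of the filter (scaling $\mathbf{x}$ by a constant $c$ would scale $S$ by $c$, so for instance $\mathcal{F}=\tfrac{1}{2}\mathbf{I}$ would trivially reduce $S$ even though it is neither low- nor high-pass in any interesting sense). I would therefore open the proof by declaring that the intended measure of smoothness is the standard Rayleigh quotient $\mathbf{x}^T\mathbf{L}\mathbf{x}/\|\mathbf{x}\|_2^2$, and noting that filter gains are fixed in the frequency-response design; once this convention is stated, the entire proof is the short rearrangement computation above and no further machinery is needed.
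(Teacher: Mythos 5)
Your proof is correct and follows essentially the same route as the paper: diagonalize the Rayleigh quotient in the Laplacian eigenbasis, observe that filtering reweights the spectral mass by $f(\lambda_i)^2$, and reduce the change in smoothness to the pairwise (Chebyshev-type) sum $\sum_{i,j}\bigl(f(\lambda_i)^2-f(\lambda_j)^2\bigr)(\lambda_i-\lambda_j)\alpha_i^2\alpha_j^2$, whose sign is fixed by the monotonicity of $f$. Your remark about the denominator is well taken --- the paper's proof itself silently replaces the $\|\mathbf{x}\|_2$ of the definition with $\mathbf{x}^T\mathbf{x}$, i.e.\ it works with the standard Rayleigh quotient exactly as you propose.
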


\begin{proof}
Since ${\mathbf{L}}$ is a real and symmetric matrix, it is diagonalizable ${\mathbf{L}}=\mathbf{U}\pmb\Lambda\mathbf{U}^T$, thus $S(\mathbf{x})$ can be represented as:
\begin{equation}
\begin{aligned}
S_1(\mathbf{x})&=\frac{\mathbf{x}^T{\mathbf{L}}\mathbf{x}}{\mathbf{x}^T\mathbf{x}}=\frac{\mathbf{x}^T\mathbf{U}\mathbf{\Lambda}\mathbf{U}^T\mathbf{x}}{\mathbf{x}^T\mathbf{U}\mathbf{U}^T\mathbf{x}}=\frac{\mathbf{y}^T\mathbf{\Lambda}\mathbf{y}}{\mathbf{y}^T\mathbf{y}}=\frac{\sum_{i=1}^n\lambda_i y_i^2}{\sum_{i=1}^n y_i^2},
\end{aligned}
\end{equation}
where $\mathbf{U}$ is the orthogonal eigenvector matrix ($\mathbf{U}^T\mathbf{U}=\mathbf{U}\mathbf{U^T}=\mathbf{I}$), and we denote $\mathbf{y} = {\mathbf{U}}^T\mathbf{x}$, and $y_i = \mathbf{u}_i^T\mathbf{x}$. Suppose there is a filter $\mathcal{F}=\mathbf{U}\mathbf{F}\mathbf{U}^T$ exerting on the signal $\mathbf{x}$ and inducing a new graph signal $\mathbf{z}=\mathcal{F}\mathbf{x}$. The corresponding smoothness of graph signal $\mathbf{z}$ is
\begin{equation}
    \begin{aligned}
S_2(\mathbf{z})&=\frac{\mathbf{z}^T{\mathbf{L}}\mathbf{z}}{\mathbf{z}^T\mathbf{z}}=\frac{\mathbf{x}^T\mathbf{U}\mathbf{F}\mathbf{\Lambda}\mathbf{F}\mathbf{U}^T\mathbf{x}}{\mathbf{x}^T\mathbf{U}\mathbf{F}^2\mathbf{U}^T\mathbf{x}}=\frac{\sum_{i=1}^nf_i^2\lambda_iy_i^2}{\sum_{i=1}^nf_i^2y_i^2},
    \end{aligned}
\end{equation}
where $f_i=f(\lambda_i)$. Thus, the change of smoothness is
\begin{equation}
\begin{aligned}
    \Delta S&=S_2(\mathbf{z})-S_1(\mathbf{x})=\frac{\sum_{i=1}^n\sum_{j>i}^n(f_i^2-f_j^2)(\lambda_i-\lambda_j)y_i^2y_j^2}{\left(\sum_{i=1}^nf_i^2y_i^2\right)\left(\sum_{i=1}^ny_i^2\right)}.
\end{aligned}
\end{equation}
When $\mathcal{F}$ is a low-pass filter with $f_i > f_j$ for $\lambda_i<\lambda_j$, then
\begin{equation}
\begin{aligned}
    \Delta S&=\frac{\sum_{i=1}^n\sum_{j>i}^n(f_i^2-f_j^2)(\lambda_i-\lambda_j)y_i^2y_j^2}{\left(\sum_{i=1}^mf_i^2y_i^2\right)\left(\sum_{i=1}^ny_i^2\right)}<0,
\end{aligned}
\end{equation}
$\Delta S$<0 means the graph signal becomes smoother. Therefore, low-pass filter can smooth the graph signal.
When $\mathcal{F}$ is a high-pass filter with $f_i < f_j$ for $\lambda_i<\lambda_j$, then
\begin{equation}
    \begin{aligned}
\Delta S&=\frac{\sum_{i=1}^n\sum_{j>i}^n(f_i^2-f_j^2)(\lambda_i-\lambda_j)y_i^2y_j^2}{\left(\sum_{i=1}^mf_i^2y_i^2\right)\left(\sum_{i=1}^ny_i^2\right)}> 0.
\end{aligned}
\end{equation}
 $\Delta S$>0 means the graph signal becomes coarser. Therefore, high-pass filter can coarsen the graph signal.
\end{proof}

\subsubsection{Graph Convolution}
Instead of spatial domain, GSP processes graph signal in spectral domain. Therefore, the graph signal processing of a given graph signal $\mathbf{x}$ can be regarded as the graph convolution over the signal $\mathbf{x}$:
\begin{equation}
\mathbf{y}=\mathcal{F}\mathbf{x} = \mathbf{U}diag(f(\lambda_1), f(\lambda_2), \cdots,f(\lambda_n))\mathbf{U}^T\mathbf{x}.
\end{equation}

\begin{figure*}[ht!]
    \centering
    \includegraphics[width=1.0\linewidth]{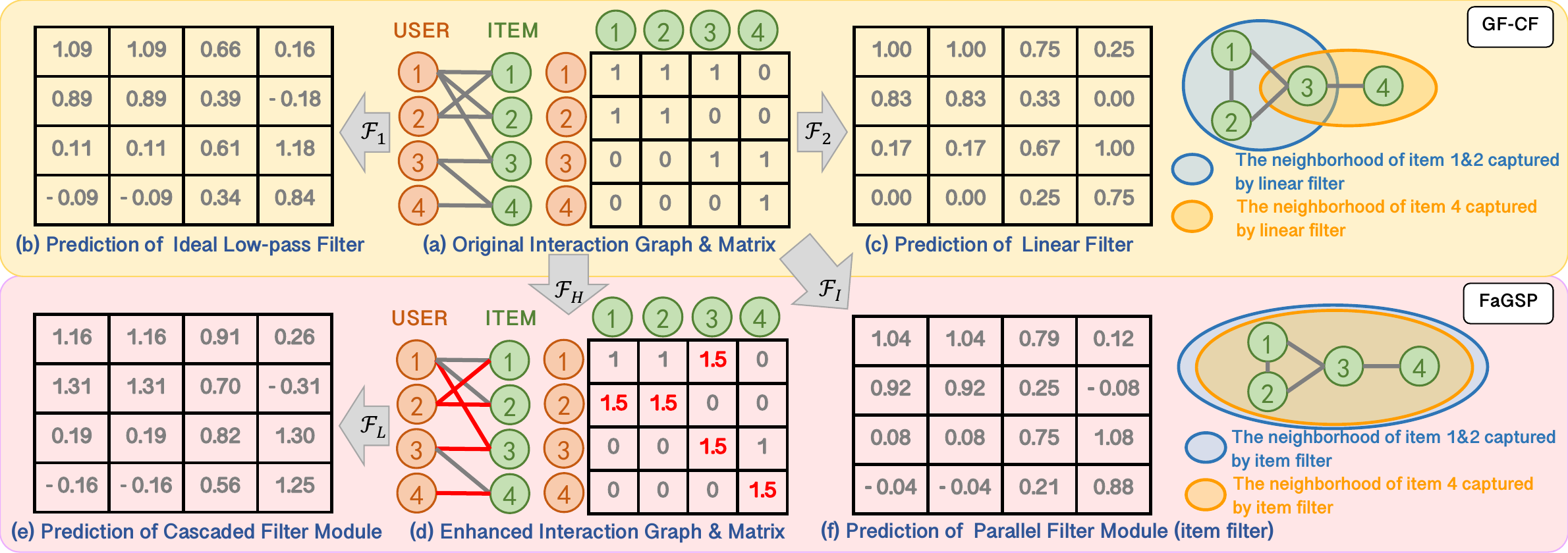}
    \caption{A toy example of user interaction prediction with 4 users and 4 items. (a) is the original interaction matrix, (b) and (c) are the prediction of the ideal low-pass filter and linear filter in GF-CF respectively. (d) is the enhanced interaction matrix, (e)--(f) are the predictions of the Cascaded Filter Module and Parallel Filter Module in \ours respectively. For ideal high-pass filter to enhance interactions in (d), we set $p_1=2$, $q=0.65$, and $\alpha_1=0.5$. For ideal low-pass filter in (b) and (d), we set $p_2=2$. In Parallel Filter Module, we only consider item high-order neighborhood information for the ease of presentation.}
    \label{fig:motivation1}
\end{figure*}

The signal $\mathbf{x}$ is first transformed from spatial domain to the spectral domain through Graph Fourier Transform basis $\mathbf{U}^T$, then the undesired frequencies are removed in the signal through function $f(\cdot)$ in spectral domain, and finally signal is transformed back to spatial domain through inverse Graph Fourier Transform basis $\mathbf{U}$.

\subsection{Notations}
Let the user set and item set be $\mathcal{U}$ and $\mathcal{V}$, and $|\mathcal{U}|=m$ and $|\mathcal{V}|=n$. The interactions between users and items can be represented as an interaction matrix $\mathbf{R}\in\{0,1\}^{m\times n}$, and if there is an interaction between user $u$ and item $i$, then $\mathbf{R}_{ui}=1$, otherwise, $\mathbf{R}_{ui}=0$. The normalized interaction matrix can be defined as $\tilde{\mathbf{R}}=\mathbf{D}_U^{-\frac{1}{2}}\mathbf{R}\mathbf{D}_I^{-\frac{1}{2}}$, where $\mathbf{D}_U=diag(\mathbf{R}\cdot\mathbf{1})$ and $\mathbf{D}_I=diag(\mathbf{R}^T\cdot\mathbf{1})$ are the user degree matrix and item degree matrix respectively. Then we can define user/item co-occurrence relationship matrix as follows:
\begin{equation}
\mathbf{O}_U=\tilde{\mathbf{R}}\tilde{\mathbf{R}}^T,\quad\mathbf{O}_I=\tilde{\mathbf{R}}^T\tilde{\mathbf{R}}.
\end{equation}

\subsection{The GF-CF and PGSP methods}
GF-CF~\cite{gfcf} adopted a combined filter to model user preference
\begin{equation}
\mathcal{F}=\alpha\mathbf{D}_I^{-\frac{1}{2}}\mathbf{V}_k\mathbf{V}_k^T\mathbf{D}_I^{\frac{1}{2}}+\mathbf{O}_I,\label{eq:gfcfeq}
\end{equation}
where $\mathbf{V}_k$ is the eigenvector matrix corresponding to the $k$ largest eigenvalues of $\mathbf{O}_I$, but in practice, it is obtained by perform Singular Value Decomposition (SVD)~\cite{svd} on $\tilde{\mathbf{R}}$ for high efficiency. $\alpha$ is a hyper-parameter that balances the two terms. Formally, GF-CF is composed of two types of filters: (1) an ideal low-pass filter $\mathcal{F}_1=\mathbf{D}_I^{-\frac{1}{2}}\mathbf{V}_k\mathbf{V}_k^T\mathbf{D}_I^{\frac{1}{2}}$, and (2) a linear filter $\mathcal{F}_2=\mathbf{O}_I$. PGSP~\cite{pgsp} designed a mixed-frequency filter to predict user future interactions
\begin{equation}
\mathcal{F}= (1-\phi)\bar{\mathbf{V}}_k\bar{\mathbf{V}}_k^T+\phi\mathbf{A},\label{eq:pgspeq}
\end{equation}
where $\bar{\mathbf{V}}_k$ is the eigenvector matrix corresponding to the first $k$ smallest eigenvalues of graph laplacian matrix $\mathbf{L}=\mathbf{I}-\mathbf{A}$, and $\mathbf{A}=\begin{bmatrix}\mathbf{O}_U&\tilde{\mathbf{R}}\\\tilde{\mathbf{R}}^T&\mathbf{O}_I\end{bmatrix}$ is the augmented similarity graph. PGSP is also composed of two filters: (1) an ideal low-pass filter $\mathcal{F}_1=\bar{\mathbf{V}}_k\bar{\mathbf{V}}_k^T$ and (2) a linear filter $\mathcal{F}_2=\mathbf{A}$.

\subsection{Motivation}\label{sec:motivation}
In this section, we take GF-CF method~\cite{gfcf} as an example to briefly illustrate the problems of existing GSP-based recommendation algorithms, which use an ideal low-pass filter and a linear filter to predict user future interactions, on a toy example composed of 4 users and 4 items. Figure~\ref{fig:motivation1} (b) and (c) show the results of user interaction prediction when using an ideal low-pass filter and a linear filter of GF-CF. We analyze the predictions as follows:

{\bf Ideal low-pass filter.} When predicting the interaction scores of $u_2$ (user 2) to $i_2$ and $i_3$ (item 2 and item 3), the difference between two scores is roughly the same as that between the interaction scores of $u_1$ to $i_2$ and $i_3$. This is because there is a connection between $i_2$ and $i_3$, that is, they are both interacted with by $u_1$, and the ideal low-pass filter can capture item common characteristic to smooth the graph signal, reducing the differences in predictions of $i_2$ and $i_3$. However, it ignores the item unique characteristic, that is, the target audience of $i_2$ is $u_1$ and $u_2$, while the target audience of $i_3$ is $u_1$ and $u_3$. Therefore, the difference in $u_2$'s interaction scores to $i_2$ and $i_3$ should be greater than that in $u_1$'s interaction scores to $i_2$ and $i_3$, and similar conclusions can be drawn for $u_3$. In addition, the relationship between $i_3$ and $i_4$ makes $u_4$ become a potential target audience of $i_3$, but not necessarily a potential target audience of $i_2$. Therefore, when predicting whether $u_4$ will interact with $i_2$ and $i_3$, the latter score should be significantly higher than the former.

{\bf Linear filter.} When predicting $u_4$'s future interactions, linear filters only focus on the interaction between $u_4$ and $i_3$, while ignoring that between $u_4$ and $i_1$(or $i_2$), although $i_1$ and $i_2$ also have a latent connection with the interacted  $i_4$, for example, $i_1$ and $i_4$ have an indirect relationship through $i_3$. Similarly, when predicting $u_2$'s future interactions, the filter only focuses on $i_3$, while ignoring $i_4$. This is because linear filters are constructed on the item co-occurrence matrix $\mathbf{O}_I$, and essentially can merely capture direct relationships between items. When using linear filters to predict user interactions, only items that have direct correlations to the user's interacted items can be predicted, while items that have indirect correlations cannot be predicted, resulting in limited coverage and inaccurate prediction results.  

Therefore, the issues of ideal low-pass filter and linear filter motivates us to design more suitable filters to extract richer information from user interactions, so as to make user preference modeling and interaction prediction accurately.

\section{The Proposed Method}\label{sec:method}
In this section, we introduce \ours, a frequency-aware graph signal processing method for collaborative filtering, to model user preference through two carefully designed filter modules: (1) a {\bf Cascaded Filter Module} used to model user preference using both unique and common characteristics, and  (2) a {\bf Parallel Filter Module} used to model user preference using user and item high-order neighborhood information. By combining these two modules, we can make the modeling of user preference more precise, thereby restore users' real interactive intentions and predict their future interactions more accurately.

\subsection{Cascaded Filter Module}\label{sec:cfm}
The proposition~\ref{prop:prop1} shows that low-pass filter can make the graph signal smoother, which is equal to retain the common characteristics among nodes and ignore the unique characteristics. Therefore, the ideal low-pass filters that existing GSP-based CF methods adopt can only capture common characteristics but fail to capture unique characteristics. Fortunately, ideal high-pass filter can capture unique characteristics since it coarsens graph signal to make nodes able to retain their own characteristics. Thus, we propose the Cascaded Filter Module, which is composed of an ideal high-pass filter and an ideal low-pass filter that work in a cascaded manner, to address the issue of the ideal low-pass filter. 

First, we perform SVD on the normalized interaction matrix $\tilde{\mathbf{R}}$
\begin{equation}
\tilde{\mathbf{R}}=\mathbf{U}\mathbf{\Lambda}\mathbf{V},\label{eq:svd}
\end{equation}
where singular values in $\mathbf{\Lambda}$ are arranged in descending order. 
We take the last $p_1$ rows of $\mathbf{V}$ (i,e., $\mathbf{V}_{-p_1:}$), which corresponds to the $p_1$ high frequency components, to construct the ideal high-pass filter as follows:
\begin{equation}
\mathcal{F}_H=\mathbf{D}^{-\frac{1}{2}}_I\mathbf{V}_{-p_1:}\mathbf{V}_{-p_1:}^T\mathbf{D}^{\frac{1}{2}}_I.\label{eq:hpf}
\end{equation}
Then, we exert $\mathcal{F}_H$ on the interaction signal $\mathbf{R}$ and obtain predicted matrix $\mathbf{R}^*$ whose element $\mathbf{R}^*_{u,i}$ is proportional to the probability that the corresponding interaction can reflect unique characteristics
\begin{equation}
    \mathbf{R}^* = \mathbf{R}\mathcal{F}_H.
\end{equation}
In order to filter out those interactions that truly reflect  unique characteristics in $\mathbf{R}^*$, we first calculate the $q$ quantile of $\mathbf{R}^*_{:, i}$ (the $i$-th column of $\mathbf{R}^*$) for each item $i$, which we denote as $r^q_i$. Then by comparing $\mathbf{R}^*_{:, i}$ and $r^q_i$, we can obtain those interactions that reflect unique characteristics, formed as $\mathbf{R}_H$ 
\begin{equation}
\begin{aligned}
    (\mathbf{R}_H)_{u,i} = \left\{\begin{matrix}
    1,&\text{~if~} \mathbf{R}^*_{u,i} \ge r^q_i~\text{and}~\mathbf{R}_{u,i}>0, \\
    0,&\text{~otherwise.}
    \end{matrix}\right.
\end{aligned}
\end{equation}
Note that these interactions should also be user historical interactions, i.e., $\mathbf{R}_{u,i}>0$. Finally, we can obtain the enhanced interaction signal $\hat{\mathbf{R}}$ by highlighting those interactions in $\mathbf{R}$ with $\mathbf{R}_H$
\begin{equation}
\hat{\mathbf{R}} = \mathbf{R} + \alpha_1\cdot\mathbf{R}_H,\label{eq:enhancement}
\end{equation}
where $\alpha_1\in\mathbb{R}^+$ is to control the impacts of  common characteristics and unique characteristics on user preference modeling and future interaction prediction. Smaller $\alpha_1$ will make the model focus on common characteristics, while larger $\alpha_1$ will emphasize the importance of unique characteristics. 

After obtaining the enhanced interaction signal $\hat{\mathbf{R}}$, we perform SVD on it to construct an ideal low-pass filter as follows:
\begin{equation}    \mathcal{F}_L=\hat{{\mathbf{D}}}^{-\frac{1}{2}}_I\hat{\mathbf{V}}_{:p_2}{\hat{\mathbf{V}}}_{:p_2}^T\hat{{\mathbf{D}}}^{\frac{1}{2}}_I,\label{eq:lpf}
\end{equation}
where $\hat{\mathbf{V}}_{:p_2}$ is constructed by the first $p_2$ rows of singular vector matrix $\hat{\mathbf{V}}$ of $\hat{\mathbf{R}}$, and $\hat{\mathbf{D}}_I=diag(\hat{\mathbf{R}}^T\cdot\mathbf{1})$. Then we use the filter to predict user future interactions as follows:
\begin{equation}
    \mathbf{P}_1 = \hat{\mathbf{R}}\mathcal{F}_L.
\end{equation}

It is worth noting that the ideal low-pass filter is designed on the enhanced interaction signal instead of the original interaction signal, thus the unique characteristics are taken into consideration when predicting user future interaction. Compared to merely using ideal low-pass filter ($\alpha_1=0$), Cascaded Filter Module can fully utilize unique characteristics and common characteristics from user interactions to jointly model user preference by exerting ideal high-pass filter and ideal low-pass filter in a cascaded manner, leading to more accurate interaction prediction.

The interactions marked in red in Figure~\ref{fig:motivation1} (d)---which are recognized by ideal high-pass filter---are those interactions that reflect unique characteristics. The results are reasonable, for example, the unique characteristic of $i_4$ should be reflected by its interaction with $u_4$ because there are no other users except $u_4$ interacting with $i_4$, which indicates the unique characteristics of $i_4$, e.g., the target audience such as $u_4$. Similarly, the interactions between $u_3$ and $i_3$ can reflect the unique characteristic of $i_3$.  The Figure~\ref{fig:motivation1} (e) shows the interaction prediction of Cascaded Filter Module in \ours. Comparing to Figure~\ref{fig:motivation1} (b), we can find that the difference in predicted scores of $u_4$ to $i_2$ and $i_3$ increases, and the score of the latter is significantly higher than that of the former, which aligns with our analysis in Section~\ref{sec:motivation}, that is, $i_3$ has a higher degree of matching with $u_4$'s preference compared to $i_2$.

\subsection{Parallel Filter Module}\label{sec:pfm}

\begin{figure}[t!]
    \centering
    \includegraphics[width=0.7\linewidth]{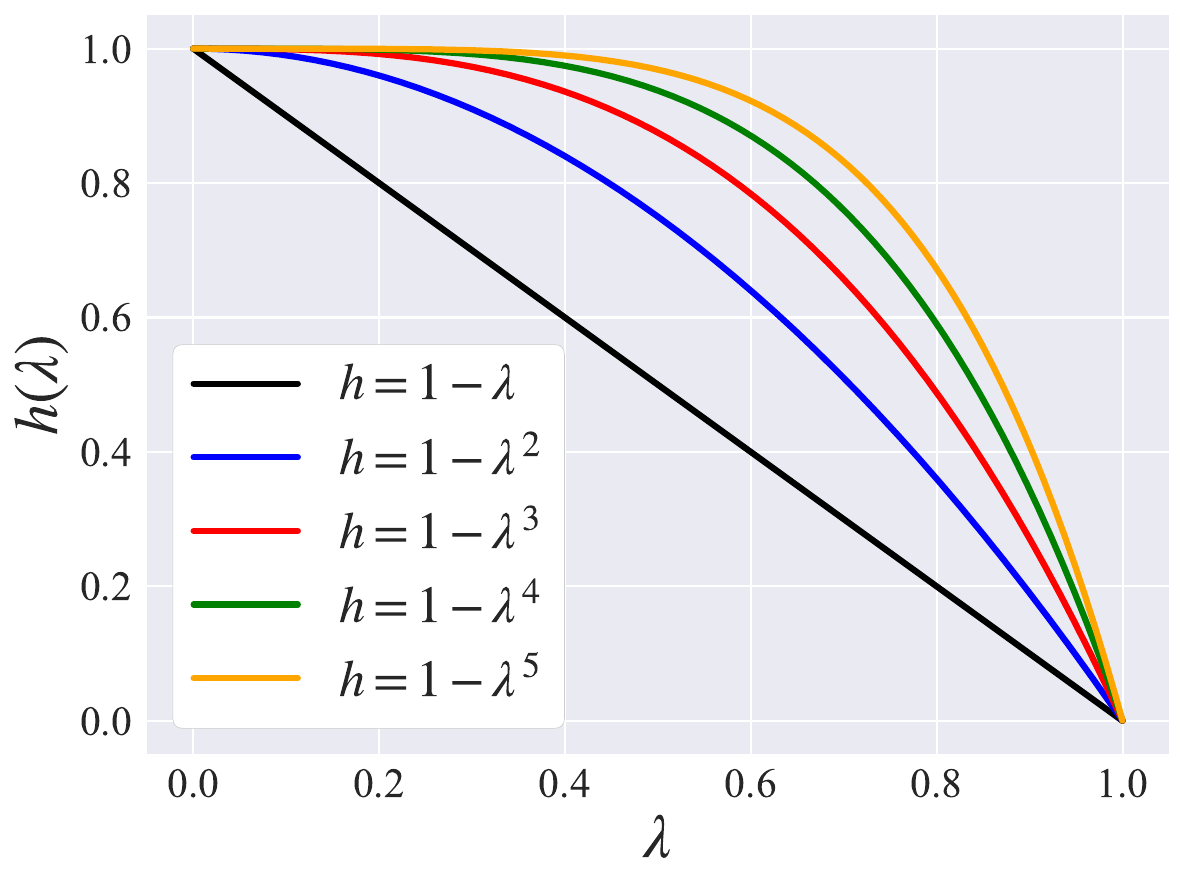}
    \caption{The frequency response functions $h(\lambda)=1-\lambda^{k_1}$ with respect to different order $k_1$ ranging from 1 to 5.}
    \label{fig:freq_func}
\end{figure}

To address the issue that existing GSP-based CF methods cannot capture user and item high-order neighborhood information, we propose the Parallel Filter Module to capture user and item high-order neighborhood characteristics respectively. Since the process of extracting user and item higher-order neighborhood information is the same, we take item as an example to illustrate how to extract this characteristic. We design the item low-pass filter as follows:
\begin{equation}
    \mathcal{F}_{I}=\mathbf{I} - (\mathbf{I}-\mathbf{O}_I)^{k_1},\label{eq:irf}
\end{equation}
where $\mathbf{O}_I=\tilde{\mathbf{R}}^T\tilde{\mathbf{R}}$, and $k_1 (k_1\ge 2)$ is the order of the low-pass filter and controls the range of neighborhood information extraction. The design of the filter is based on the following two considerations: one is that it can capture neighborhood information with any order. As $k_1$ increases, more distant neighborhood information can be extracted. Note that when $k_1=1$, $\mathcal{F}_{I}$ degenerates to a linear filter $\mathcal{F}=\mathbf{O}_I$ that existing GSP-based CF methods adopt. 

Proposition~\ref{prop:prop2} shows that the frequency response function of $\mathcal{F}_{I}$ is a non-linear concave function, which reveals the other consideration for the design of the filter, i.e., preserving the common and unique characteristics in the high-order neighborhood of item.

\begin{proposition}\label{prop:prop2}
The frequency response function of $\mathcal{F}_{I}$ is a non-linear concave function.
\end{proposition}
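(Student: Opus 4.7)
The plan is to diagonalize $\mathcal{F}_I$ and simply read off the frequency response function from the diagonal, then verify concavity by a second-derivative check.

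First I would exploit the fact that $\mathbf{O}_I=\tilde{\mathbf{R}}^T\tilde{\mathbf{R}}$ is real, symmetric, and positive semi-definite. It therefore admits an orthonormal eigendecomposition $\mathbf{O}_I=\mathbf{V}\mathbf{M}\mathbf{V}^T$ whose eigenvalues $\mu_i$ are the squared singular values of $\tilde{\mathbf{R}}$. Because $\tilde{\mathbf{R}}=\mathbf{D}_U^{-1/2}\mathbf{R}\mathbf{D}_I^{-1/2}$ is the symmetric normalization of a bipartite adjacency matrix, its largest singular value is $1$, so $\mu_i\in[0,1]$. Consequently $\mathbf{I}-\mathbf{O}_I=\mathbf{V}(\mathbf{I}-\mathbf{M})\mathbf{V}^T$ has eigenvalues $\lambda_i=1-\mu_i\in[0,1]$, and these are precisely the ``frequencies'' on which the item filter acts.

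Next I would raise $\mathbf{I}-\mathbf{O}_I$ to the power $k_1$ in this eigenbasis and subtract from the identity, yielding
\begin{equation*}
\mathcal{F}_I \;=\; \mathbf{V}\,\mathrm{diag}\!\left(1-\lambda_1^{k_1},\ldots,1-\lambda_n^{k_1}\right)\mathbf{V}^T.
\end{equation*}
Comparing with the canonical form~\eqref{eq:filter}, this identifies the frequency response as $h(\lambda)=1-\lambda^{k_1}$ on $[0,1]$, which is exactly the curve plotted in Figure~\ref{fig:freq_func}.

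Finally, I would verify concavity and non-linearity directly. Differentiating gives $h'(\lambda)=-k_1\lambda^{k_1-1}$ and $h''(\lambda)=-k_1(k_1-1)\lambda^{k_1-2}$. For $k_1\ge 2$ and $\lambda\in(0,1]$ the second derivative is strictly negative, so $h$ is strictly concave; non-linearity is immediate since $h''\not\equiv 0$ whenever $k_1\ge 2$.

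The only subtle step is justifying the spectral range $\mu_i\in[0,1]$, since everything downstream depends on $h$ being defined on $[0,1]$ (where $\lambda^{k_1-2}$ does not blow up and the concavity analysis is clean). Once the normalization of $\tilde{\mathbf{R}}$ is used to pin down this range, the rest of the argument is a one-line eigenvalue computation plus elementary calculus, so I do not expect a genuine obstacle beyond that bookkeeping.
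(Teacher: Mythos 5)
Your proof follows essentially the same route as the paper's: diagonalize $\mathbf{I}-\mathbf{O}_I$ using its real-symmetric structure, read off the frequency response $h(\lambda)=1-\lambda^{k_1}$ by comparison with the canonical filter form, and check concavity for $k_1\ge 2$. The only differences are that you actually derive the spectral range $[0,1]$ from the singular values of the normalized bipartite adjacency matrix $\tilde{\mathbf{R}}$ (where the paper just cites GF-CF) and you verify concavity via the second derivative rather than asserting the defining inequality; both steps are correct.
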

\begin{proof}
    Let $\mathbf{L}_I=\mathbf{I}-\mathbf{O}_I$. Since both $\mathbf{I}$ and $\mathbf{O}_I$ are real and symmetric matrices, $\mathbf{L}_I$ is also a real and symmetric matrix, thus $\mathbf{L}_I$ is diagonalizable and can be represented as $\mathbf{L}_I=\mathbf{U}\pmb\Lambda\mathbf{U}^T$, and $\mathbf{U}\mathbf{U}^T=\mathbf{U}^T\mathbf{U}=\mathbf{I}$. Then we have
    \begin{align}
        \mathcal{F}_{I} &= \mathbf{I} - (\mathbf{I}-\mathbf{O}_I)^{k_1}=\mathbf{U}\mathbf{U}^T - \mathbf{U}\pmb\Lambda^{k_1}\mathbf{U}^T=\mathbf{U}(\mathbf{I}-\pmb\Lambda^{k_1})\mathbf{U}^T\label{eq:itemnonlinear}.
    \end{align}
    By comparing Eq. (\ref{eq:filter}) and Eq. (\ref{eq:itemnonlinear}), we can easily find that the frequency response function of $\mathcal{F}_{I}$ is $f(\lambda_i)=1-\lambda_i^{k_1}$, where $\lambda_i$ is the $i$-th eigenvalue of $\mathbf{L}_{I}$, and its range is [0,1], which is proved in GF-CF~\cite{gfcf}. When $k_1\ge2$, it is obviously a non-linear concave function, since for any $\lambda_i, \lambda_j\in[0,1]$ and $\beta\in[0,1]$, we have $f\left((1-\beta)\times\lambda_i+\beta\times\lambda_j\right)\ge(1-\beta)\times f(\lambda_i)+\beta\times f(\lambda_j)$ .
\end{proof}

Figure~\ref{fig:freq_func} shows the frequency response functions with respect to different order of filter $k_1$ ranging from 1 to 5. We can find that when $k_1=1$, the frequency response function of linear filter (black curve) is linear, both low frequency and high frequency are attenuated, making it unable to capture common and unique characteristics in direct neighborhood of item for interaction prediction. However, when $k_1$ grows, the frequency response function (colored curves) becomes non-linear, and more and more low frequency components are preserved and high frequency components are enhanced. For example, when $k_1=2$, around 3.2\% low frequency components can be retained since $h(0.032)=1-0.032^2=0.999\approx1$, while when $k_1=5$, around 25.1\% low frequency components can be retained. Preserving more low frequency components and high frequency components means more information with respect to items in high-order neighborhood can be used to model user preference, thus achieving more accurate interaction prediction.

With item low-pass filter $\mathcal{F}_{I}$, we can predict user future interactions based on the item high-order neighborhood information:
\begin{equation}
\mathbf{P}_2=\mathbf{R}\mathcal{F}_{I}.
\end{equation}

We can also predict user future interactions based on user high-order neighborhood information in the user interactions by exerting a user low-pass filter as follows:
\begin{align}
\mathbf{O}_U=\tilde{\mathbf{R}}\tilde{\mathbf{R}}^T,\quad
\mathcal{F}_{U}=\mathbf{I}-(\mathbf{I}-\mathbf{O}_U)^{k_2},\quad\label{eq:urf}
\mathbf{P}_3=\mathcal{F}_{U}\mathbf{R},
\end{align}
where $k_2 (k_2\ge2)$ is the order of filter $\mathcal{F}_{U}$.

The Figure~\ref{fig:motivation1} (f) shows the interaction prediction of Parallel Filter Module in \ours. For the ease of presentation, we only use item low-pass filter $\mathcal{F}_I$ to predict interactions. Comparing to Figure~\ref{fig:motivation1} (c), we can find that the neighborhood of $i_1$ has expanded from $i_2$ and $i_3$ to $i_2$, $i_3$ and $i_4$, and the the neighborhood of $i_4$ has expanded from $i_3$ to $i_1$, $i_2$ and $i_3$, which achieves the effect of taking more items into consideration for user interaction prediction, for example,  $i_4$ for $u_2$, and $i_1$ and $i_2$ for $u_4$. In addition, due to the difference between $i_4$ and $i_1$ (or $i_2$) and the relationship between $i_3$ and $i_1$ (or $i_2$) , after introducing high-order neighborhood information, the prediction score of $u_4$ to $i_3$ decreases. This indicates that high-order neighborhood information can also correct the prediction deviation caused by merely using direct neighborhood information, making the interaction prediction more accurate.

\subsection{Model Inference}\label{sec:mi}
User/item common characteristics and unique characteristics, along with user and item high-order neighborhood information, are orthogonal but both beneficial for predicting user interactions.  
Therefore, we use a linear model to combine the outputs of Cascaded Filter Module and Parallel Filter Module to predict user interactions 
\begin{equation}
    \mathbf{P} = \alpha_2\cdot\mathbf{P_1}+\mathbf{P_2}+\mathbf{P_3},
\end{equation}
where $\alpha_2$ is a coefficient that balance different prediction terms.

\subsection{The Time Complexity}

In this section, we take FLoating-point OPerations (FLOPs) as a quantification metric to analyze the time complexity of \ours. As shown in~\citet{flops}, the FLOPs of truncated SVD on $\mathbf{A}\in\mathbb{R}^{r\times s}$ is $\mathcal{O}(2rs^2+s^3)$. Therefore, the FLOPs for $\mathcal{F_H}$ and $\mathcal{F}_L$ are $\mathcal{O}(2mn^2+n^3)+\mathcal{O}(p_1n^2)$ and $\mathcal{O}(2mn^2+n^3)+\mathcal{O}(p_2n^2)$, and that for $\mathbf{R}^*$ and $\mathbf{P}_1$ are both $\mathcal{O}(mn^2)$. Thus the total FLOPs of Cascaded Filter Module is $\mathcal{O}(mn^2+n^3)$ where $p_1, p_2\ll n$. The FLOPs for $\mathbf{O}_I$, $\mathcal{F}_I$ and $\mathbf{P}_2$ are $\mathcal{O}(mn^2)$,  $\mathcal{O}(n^2+(k_1-1)\times n^3+n^2)\approx\mathcal{O}(n^3)$ and $\mathcal{O}(mn^2)$, similarly, that for $\mathcal{O}_U$,  $\mathcal{F}_U$ and $\mathbf{P}_3$ are $\mathcal{O}(m^2n)$, $\mathcal{O}(m^3)$ and $\mathcal{O}(m^2n)$. Thus the total FLOPs of Parallel Filter Module is $\mathcal{O}(mn^2+n^3+m^3+m^2n)$. To sum up, if $m\approx n$, then the total FLOPs of \ours is $\mathcal{O}(n^3)$, which is similar to that of GF-CF~\cite{gfcf}.

\section{Experiment}
\subsection{Experimental Setting}

\begin{table}[t!]
\centering
\caption{The statistics of the six real-world datasets.}
\resizebox{1.0\linewidth}{!}{
\begin{tabular}{l|cccc|c}
\toprule
& \# Users & \# Items & \# Interactions & Density & Domain\\
\midrule
ML100K (MovieLens 100K)&943&1,682&100,000&0.0630 & Movie\\
Beauty (Amazon Beauty)&22,363&12,101&198,502&0.0007&Product\\
BX (Book-Crossing)&18,964&19,998&482,153&0.0013&Book\\
LastFM&992&10,000&571,817&0.0576&Music\\
ML1M (MovieLens 1M)&6,040&3,706&1,000,209&0.0477&Movie\\
Netflix&20,000&17,720&5,678,654&0.0160&Movie\\
\bottomrule
\end{tabular}
}
\label{tab:dataset}
\end{table}

We conduct experiments on six widely used datasets with varying domains: (1) {\bf ML100K, ML1M and Netflix}.(2) {\bf Beauty}. (3) {\bf BX}. (4) {\bf LastFM}. All datasets are divided into training set, validation set and test set with the ratio of 72\%:8\%:20\%. Table~\ref{tab:dataset} shows the statistics.

We evaluate the performance of \ours with three metrics in the Top-K scenario: (1) {\bf F1}, 
(2) {\bf Mean Reciprocal Rank (MRR)},
and (3) {\bf Normalized Discounted Cumulative Gain (NDCG)}. For each metric, we report their results when K=10 and K=20. 

We compare the performance of \ours with seven GCN-based methods and two GSP-based methods: (1) {\bf LR-GCCF} ~\cite{lrgccf}; (2) {\bf LCFN} ~\cite{lcfn}; (3) {\bf DGCF}~\cite{dgcf}; (4) {\bf LightGCN}~\cite{lightgcn}; (5) {\bf IMP-GCN}~\cite{imp-gcn}; (6) {\bf SimpleX}~\cite{simplex}; (7) {\bf UltraGCN}~\cite{ultragcn}; (8) {\bf GF-CF}~\cite{gfcf}; and (9) {\bf PGSP}~\cite{pgsp}. 

\begin{table*}[t!]
\setlength{\tabcolsep}{4.56pt}
\caption{The results of performance comparison on six public datasets. The best performance is denoted in bold, the second best performance is denoted with an underline. RI refers to relative improvement of \ours with respect to the best baseline. Note that DGCF occurred the Out-Of-Memory (OOM) problem on Netflix dataset, so we do not report the results.}
\label{tab:acc_com}
\centering
\resizebox{0.9\linewidth}{!}
{
\begin{tabular}{l|l|ccccccccc|c|c}
\toprule
 &&LR-GCCF&LCFN&DGCF&LightGCN&IMP-GCN&SimpleX&UltraGCN&GF-CF&PGSP&\ours & RI\\
\midrule
\multirow{6}{*}{ML100K}&F1@10&$0.1444$&$0.1393$&$0.2421$&$0.2461$&$0.2287$&$\underline{0.2466}$&$0.2366$&$0.2425$&$0.2407$&$\mathbf{0.2600}$&+5.43\%\\
&MRR@10&$0.4616$&$0.4142$&$0.6012$&$0.5877$&$0.5690$&$\underline{0.6064}$&$0.5749$&$0.6010$&$0.5767$&$\mathbf{0.6380}$&+5.21\%\\
&NDCG@10&$0.5603$&$0.5326$&$0.6808$&$0.6771$&$0.6605$&$\underline{0.6866}$&$0.6688$&$0.6843$&$0.6722$&$\mathbf{0.7092}$&+3.29\%\\
&F1@20&$0.1974$&$0.1759$&$0.3157$&$\underline{0.3248}$&$0.1790$&$0.3163$&$0.3145$&$0.3154$&$0.3132$&$\mathbf{0.3348}$&+3.08\%\\
&MRR@20&$0.4411$&$0.3557$&$0.5422$&$0.5652$&$\underline{0.5690}$&$0.5411$&$0.5289$&$0.5593$&$0.5496$&$\mathbf{0.5901}$&+3.71\%\\
&NDCG@20&$0.5555$&$0.5043$&$0.6538$&$0.6618$&$0.6605$&$0.6543$&$0.6486$&$\underline{0.6621}$&$0.6495$&$\mathbf{0.6805}$&+2.78\%\\ 
\midrule
\multirow{6}{*}{Beauty}&F1@10&$0.0268$&$0.0094$&$0.0325$&$0.0327$&$0.0289$&$0.0333$&$0.0287$&$0.0330$&$\underline{0.0335}$&$\mathbf{0.0351}$&+4.78\%\\
&MRR@10&$0.0443$&$0.0147$&$0.0507$&$0.0519$&$0.0451$&$0.0517$&$0.0441$&$0.0517$&$\underline{0.0525}$&$\mathbf{0.0549}$&+4.57\%\\
&NDCG@10&$0.0635$&$0.0231$&$0.0750$&$0.0763$&$0.0667$&$0.0761$&$0.0649$&$0.0751$&$\underline{0.0765}$&$\mathbf{0.0799}$&+4.44\%\\
&F1@20&$0.0217$&$0.0078$&$0.0262$&$0.0265$&$0.0238$&$0.0271$&$0.0229$&$\underline{0.0273}$&$0.0272$&$\mathbf{0.0284}$&+4.03\%\\
&MRR@20&$0.0388$&$0.0165$&$0.0468$&$0.0466$&$0.0409$&$0.0461$&$0.0393$&$0.0469$&$\underline{0.0481}$&$\mathbf{0.0485}$&+0.83\%\\
&NDCG@20&$0.0678$&$0.0295$&$0.0823$&$\underline{0.0829}$&$0.0729$&$0.0819$&$0.0689$&$0.0819$&$0.0828$&$\mathbf{0.0846}$&+2.05\%\\ 
\midrule
\multirow{6}{*}{BX}
&F1@10&$0.0142$&$0.0140$&$0.0319$&$0.0334$&$0.0155$&$\underline{0.0346}$&$0.0319$&$0.0309$&$0.0308$&$\mathbf{0.0361}$&+4.34\%\\
&MRR@10&$0.0274$&$0.0276$&$0.0580$&$0.0601$&$0.0331$&$\underline{0.0628}$&$0.0594$&$0.0576$&$0.0578$&$\mathbf{0.0656}$&+4.46\%\\
&NDCG@10&$0.0414$&$0.0411$&$0.0839$&$0.0861$&$0.0479$&$\underline{0.0892}$&$0.0847$&$0.0821$&$0.0817$&$\mathbf{0.0931}$&+4.37\%\\
&F1@20&$0.0138$&$0.0138$&$0.0299$&$0.0299$&$0.0157$&$\underline{0.0315}$&$0.0282$&$0.0293$&$0.0293$&$\mathbf{0.0328}$&+4.13\%\\
&MRR@20&$0.0286$&$0.0251$&$0.0563$&$0.0546$&$0.0305$&$\underline{0.0588}$&$0.0535$&$0.0543$&$0.0553$&$\mathbf{0.0607}$&+3.23\%\\
&NDCG@20&$0.0495$&$0.0468$&$0.0951$&$0.0926$&$0.0534$&$\underline{0.0976}$&$0.0913$&$0.0917$&$0.0923$&$\mathbf{0.1006}$&+3.07\%\\
\midrule
\multirow{6}{*}{LastFM}&F1@10&$0.0689$&$0.0507$&$0.0867$&$\underline{0.0968}$&$0.0799$&$0.0941$&$0.0811$&$0.0964$&$0.0945$&$\mathbf{0.0995}$&+2.79\%\\
&MRR@10&$0.5046$&$0.4280$&$0.5706$&$0.6073$&$0.5398$&$0.5930$&$0.5529$&$\underline{0.6086}$&$0.6007$&$\mathbf{0.6338}$&+4.36\%\\
&NDCG@10&$0.5893$&$0.5171$&$0.6509$&$\underline{0.6855}$&$0.6241$&$0.6690$&$0.6340$&$0.6817$&$0.6767$&$\mathbf{0.6969}$&+1.66\%\\
&F1@20&$0.1088$&$0.0796$&$0.1408$&$\underline{0.1532}$&$0.1243$&$0.1464$&$0.1298$&$0.1487$&$0.1472$&$\mathbf{0.1593}$&+3.98\%\\
&MRR@20&$0.4761$&$0.4067$&$0.5539$&$\underline{0.5840}$&$0.5175$&$0.5570$&$0.5332$&$0.5749$&$0.5653$&$\mathbf{0.5985}$&+2.48\%\\
&NDCG@20&$0.5819$&$0.5142$&$0.6467$&$\underline{0.6683}$&$0.6179$&$0.6519$&$0.6246$&$0.6639$&$0.6593$&$\mathbf{0.6836}$&+2.29\%\\  
\midrule
\multirow{6}{*}{ML1M}&F1@10&$0.0930$&$0.0860$&$0.1956$&$0.2044$&$0.1837$&$0.2087$&$0.1963$&$\underline{0.2106}$&$0.2090$&$\mathbf{0.2203}$&+4.61\%\\
&MRR@10&$0.2946$&$0.2839$&$0.4825$&$0.5010$&$0.4685$&$0.5051$&$0.4886$&$0.4996$&$\underline{0.5063}$&$\mathbf{0.5229}$&+3.28\%\\
&NDCG@10&$0.3787$&$0.3694$&$0.5740$&$0.5873$&$0.5594$&$0.5921$&$0.5773$&$0.5897$&$\underline{0.5923}$&$\mathbf{0.6082}$&+2.68\%\\
&F1@20&$0.1246$&$0.1199$&$0.2448$&$0.2534$&$0.2273$&$0.2546$&$0.2435$&$\underline{0.2570}$&$0.2548$&$\mathbf{0.2673}$&+4.01\%\\
&MRR@20&$0.2924$&$0.2529$&$0.4476$&$0.4644$&$0.4292$&$0.4625$&$0.4500$&$0.4557$&$\underline{0.4646}$&$\mathbf{0.4748}$&+2.20\%\\
&NDCG@20&$0.3980$&$0.3809$&$0.5586$&$\underline{0.5692}$&$0.5416$&$0.5687$&$0.5600$&$0.5678$&$0.5687$&$\mathbf{0.5808}$&+2.04\%\\  
\midrule
\multirow{6}{*}{Netflix}&F1@10&$0.0920$&$0.0660$&$--$&$0.1205$&$0.1026$&$0.1137$&$0.0835$&$\underline{0.1244}$&$0.1185$&$\mathbf{0.1294}$&+4.02\%\\
&MRR@10&$0.5234$&$0.4056$&$--$&$0.5974$&$0.5386$&$0.5839$&$0.4708$&$\underline{0.6126}$&$0.5905$&$\mathbf{0.6293}$&+2.73\%\\
&NDCG@10&$0.6134$&$0.5058$&$--$&$0.6814$&$0.6307$&$0.6694$&$0.5746$&$\underline{0.6928}$&$0.6756$&$\mathbf{0.7079}$&+2.18\%\\
&F1@20&$0.1432$&$0.1043$&$--$&$0.1848$&$0.1569$&$0.1784$&$0.1227$&$\underline{0.1890}$&$0.1792$&$\mathbf{0.1970}$&+4.23\%\\
&MRR@20&$0.4963$&$0.3852$&$--$&$0.5636$&$0.4995$&$0.5555$&$0.4309$&$\underline{0.5691}$&$0.5502$&$\mathbf{0.5827}$&+2.39\%\\
&NDCG@20&$0.6043$&$0.5162$&$--$&$0.6642$&$0.6152$&$0.6572$&$0.5580$&$\underline{0.6689}$&$0.6543$&$\mathbf{0.6814}$&+1.87\%\\  
\bottomrule
\end{tabular}
}
\end{table*}

For all baselines, we use their released code and carefully tune hyper-parameters according to their papers. For \ours, we tuned the number of high and low frequency components in ideal high-pass filter and ideal low-pass filter $p_1$ and $p_2$ from 16 to 256 for ML100K, and 32 to 1024 for other datasets. We tune $\alpha_1$ and $\alpha_2$ from 0.1 to 1.0 with step 0.05. For the orders of item low-pass filter and user low-pass filter $k_1$ and $k_2$, we tune them from 2 to 14, and for quantile $q$, we tune it from [0.6, 0.65, 0.7, 0.75, 0.8]. Note that in practice, we can tune each hyper-parameter in a hierarchical manner to reduce the search space of hyper-parameters, for instance, we can first search $k_1$ and $k_2$ in [2, 6, 10, 14], then in [11, 12, 13] on ML1M dataset.

\begin{table*}[t!]
\setlength{\tabcolsep}{2pt}
\caption{The ablation study of \ours on ML100K, ML1M and LastFM datasets. The best performance is denoted in bold. 
}
\centering
\resizebox{1.0\linewidth}{!}
{
\begin{tabular}{l|cccc|cccc|cccc}
 \toprule
 \multirow{2}{*}{}&\multicolumn{4}{c|}{\bf ML100K}&\multicolumn{4}{c|}{\bf ML1M}&\multicolumn{4}{c}{\bf LastFM}\\
  \cline{2-5}\cline{6-9}\cline{10-13}
&MRR@10&NDCG@10&MRR@20&NDCG@20&MRR@10&NDCG@10&MRR@20&NDCG@20&MRR@10&NDCG@10&MRR@20&NDCG@20\\
\midrule
(1)~\ours&$\mathbf{0.6380}$&$\mathbf{0.7092}$&$\mathbf{0.5901}$&$\mathbf{0.6805}$&$\mathbf{0.5229}$&$\mathbf{0.6082}$&$\mathbf{0.4748}$&$\mathbf{0.5808}$&$\mathbf{0.6338}$&$\mathbf{0.6969}$&$\mathbf{0.5985}$&$\mathbf{0.6836}$\\
\midrule
(2)~\ours w/o IHF&0.6233&0.6998&0.5877&0.6795&0.5196&0.6062&0.4744&$\mathbf{0.5808}$&0.6249&0.6937&0.5952&0.6810\\
(3)~\ours w/o IHF+ILF&0.6227&0.6993&0.5856&0.6791&0.5128&0.6025&0.4683&0.5775&0.6234&0.6929&0.5817&0.6723\\
\midrule
(4)~\ours w/o IHNF&0.6187&0.6976&0.5791&0.6733&0.5117&0.5988&0.4722&0.5784&0.6242&0.6946&0.5905&0.6757\\
(5)~\ours w/o UHNF&0.6313&0.7046&0.5836&0.6749&0.5163&0.6044&0.4726&0.5783&0.6304&0.6938&0.5892&0.6780\\
(6)~\ours w/o IHNF+UHNF&0.5938&0.6760&0.5694&0.6611&0.4918&0.5806&0.4543&0.5620&0.6063&0.6827&0.5651&0.6640\\
\bottomrule
\end{tabular}
}
\label{tab:ablstu}
\end{table*}

\subsection{Performance Comparison}
Table~\ref{tab:acc_com} compares the performance between \ours and other methods on six datasets. Specifically, 
we have the following observations:
\begin{enumerate}[0]
    \item[$\bullet$] LightGCN and SimpleX achieve the best performance among all GCN-based CF methods substantially. This is because LightGCN removes feature transformation and non-linear activation that will hurt the expressivity of GCN, thereby improving the accuracy of user preference modeling, and SimpleX designs a cosine contrastive loss with large negative sampling ratio to train the model, making it able to distill more information from supervision signal, thus achieving more accurate interaction prediction.
    
    \item[$\bullet$] GSP-based methods, i.e., GF-CF and PGSP, can achieve better performance than GCN-based methods in most cases. The reason is GCN-based methods use low-pass filters whose frequency response function is non-linear convex~\cite{gcnlpf}, and cannot preserve sufficient number of components to model accurate user preference. While GF-CF and PGSP use ideal low-pass filters, which can retain adaptive amount of low frequency components, to model user preference, making the preference more accurate.
    
    \item[$\bullet$] \ours obtain the best results on all datasets. This is because \ours can fully utilize not only user/item common characteristics and unique characteristics but also user and item high-order neighborhood information from user interactions, leading to the more precise user preference modeling and more accurate prediction of user future interactions.
\end{enumerate}

\subsection{Ablation Study}

We conduct ablation study on ML100K, ML1M and LastFM datasets to comprehensively analyze the effect of each component, i.e., Ideal High-pass Filter (IHF) in Eq. (\ref{eq:hpf}), Ideal Low-pass Filter (ILF) in Eq. (\ref{eq:lpf}), Item High-order Neighborhood Filter (IHNF) in Eq. (\ref{eq:irf}) and User High-order Neighborhood Filter (UHNF) in Eq. (\ref{eq:urf}), to the performance of \ours. Table~\ref{tab:ablstu} shows the experimental results. Due to the space limitation, we do not report the results of F1@10 and F1@20, while they have the same trends as other metrics. From the results, we have the following findings:
\begin{enumerate}[1.]
    \item Comparing setting (1) and (2), we can find that when removing IHF, the performance of \ours has decreased. This is because IHF can capture user/item unique characteristics, which can provide detailed information about which type of item that user will interact with. Therefore, neglecting unique characteristics, i.e., removing IHF, will hurt the performance of \ours.
    \item Comparing setting (2) and (3), the accuracy of \ours has decreased, which shows the effectiveness of ILF. The role of ILF is to capture user/item common characteristics, so as to provide general information on the direction of user future interaction. When removing ILF, the common characteristics cannot be utilized, thus affecting the recommendation performance.
    \item Comparing setting (1) and (4) or (5) and (6), the performance of \ours has declined, which demonstrates the effectiveness of IHNF to the modeling of user preference. IHNF can capture the high-order relationship of items, making \ours able to utilize information from item high-order neighbors so as to improve the accuracy of interaction prediction. UHNF can reach similar conclusions by comparing setting (1) and (5) or (4) and (6).
\end{enumerate}

\begin{figure*}[t!]
\begin{minipage}[t]{0.24\linewidth}
\centering
    \includegraphics[width=1.0\linewidth]{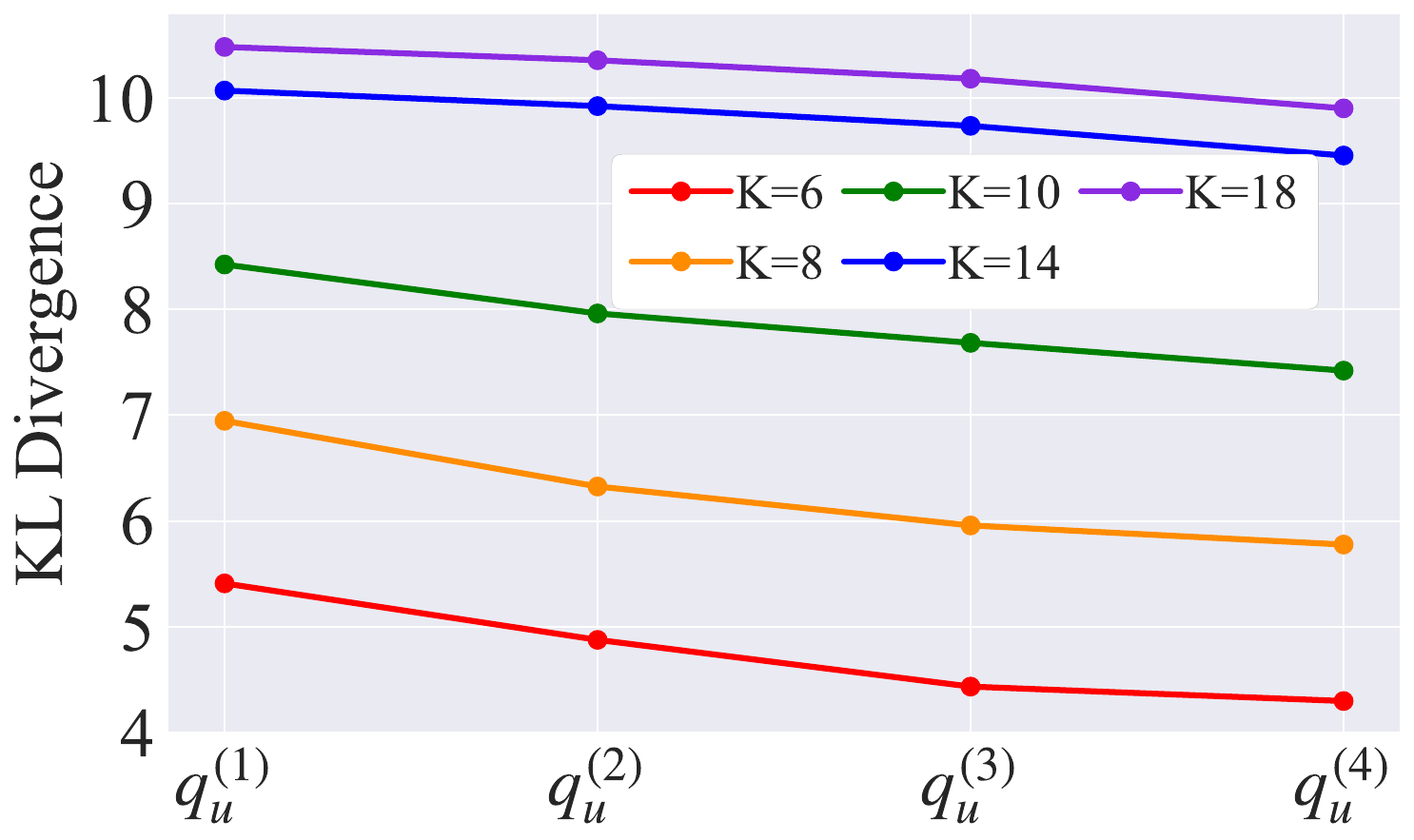}
    \caption{Visualization of consistency between user historical and predicted preference.
    }
    \label{fig:visualization}
\end{minipage}
\hfill
\begin{minipage}[t]{0.24\linewidth}
\centering
    \includegraphics[width=1.0\linewidth]{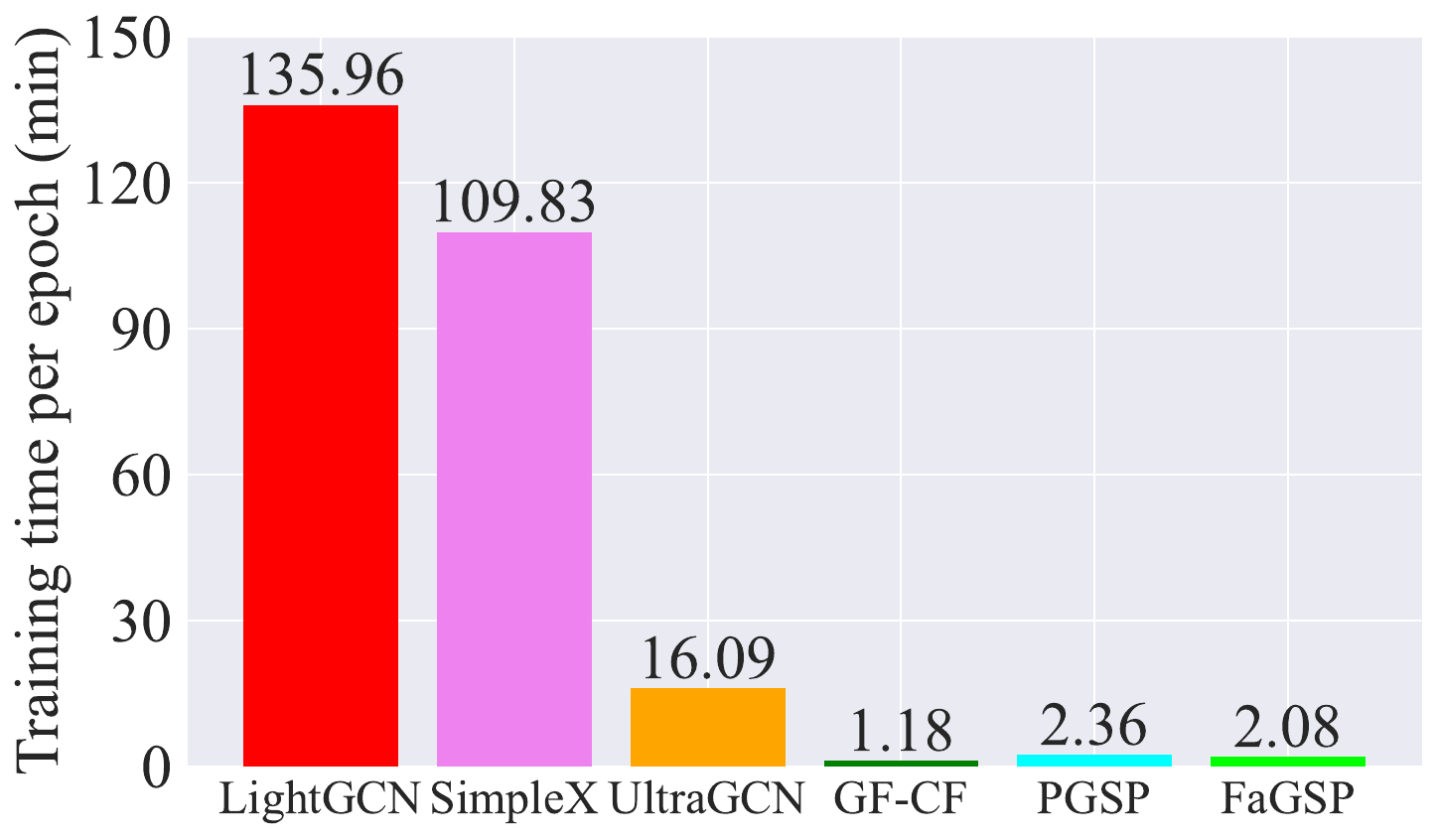}
    \caption{The average training time (5 times) of \ours and other methods on ML1M.}
    \label{fig:eff_ana}
\end{minipage}
\hfill
\begin{minipage}[t]{0.48\linewidth}
\centering
    \includegraphics[width=1.0\linewidth]{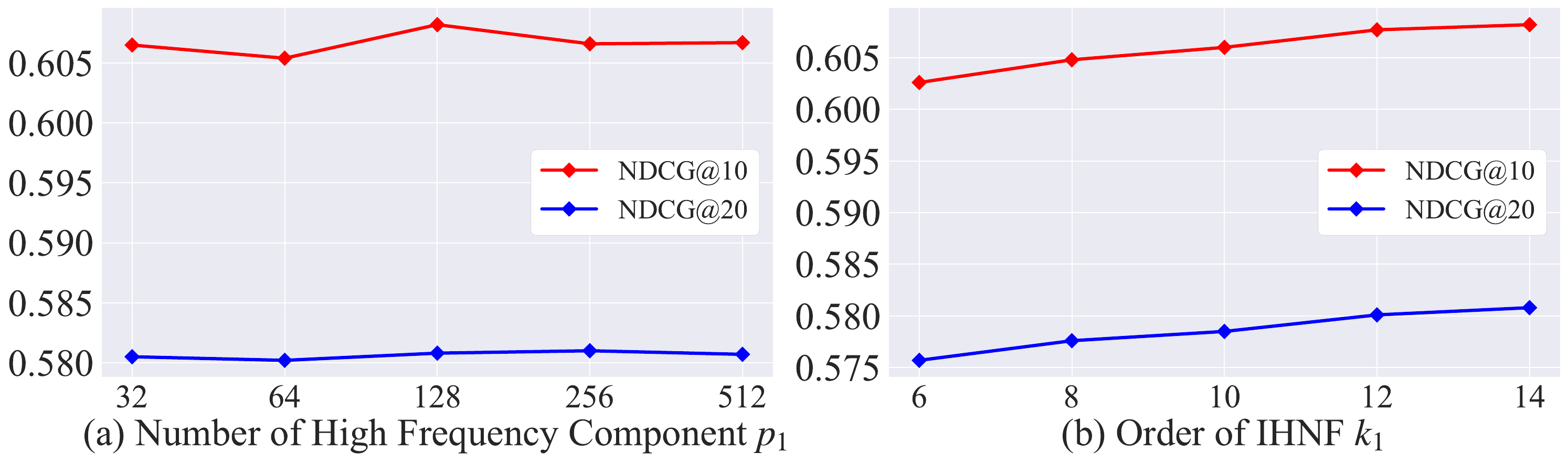}
    \caption{The sensitivity analysis of two hyper-parameters on ML1M dataset: Number of High Frequency Component $p_1$ and Order of Item High-order Neighborhood Filter (IHNF) $k_1$.}
    \label{fig:sen_ana}
\end{minipage}
\end{figure*}

\subsection{Visualization}
We visualize the consistency between user historical preference distribution (calculated from training data) and user predicted preference distribution (calculated from model prediction) on ML100K to show that introducing IHF, UHNF and IHNF can improve the accuracy of user preference modeling. Specifically, we define user $u$'s historical preference distribution $p_u$ according to the categories (e.g., Comedy, Action) of items that he/she has interacted with:
\begin{equation}
    p_{u}(\text{category}=l)=\frac{C_{ul}}{\sum_{k=1}^LC_{uk}},
\end{equation}
where $C_{ul}~(u=1,2,\cdots,M,~~~ l=1,2,\cdots, L)$ represents the number of appearances of the $l$-th category in the user $u$'s interacted items, $M$ is the number of users, and $L$ is the number of categories. 

Similarly, we can define user $u$'s predicted preference distributions from his/her predicted items in test using ILF as $q^{(1)}_u$, and that after introducing IHF, UHNF and IHNF sequentially on the basis of ILF as $q^{(2)}_u$ (i.e. ILF+IHF), $q^{(3)}_u$ (i.e. ILF+IHF+UHNF) and $q_u^{(4)}$ (i.e. ILF+IHF+UHNF+IHNF). Then we use the Kullback–Leibler (KL) divergence to evaluate the quality of the predicted preference distribution, where smaller KL divergence indicates better predicted preference distribution. The KL divergence between $p$ and $q$ is:
\begin{equation}
    \mathbf{KL}(p,q^{(w)}) = \frac{1}{M}\sum_{u=1}^M\sum_{l=1}^Lp_u(l)\ln\frac{p_u(l)}{q^{(w)}_u(l)},~w=1,2,3,4.
\end{equation}
If $\mathbf{KL}(p, q^{(1)}) > \mathbf{KL}(p, q^{(2)})$, it indicates that it is useful to introduce IHF, making the predicted preference distribution have a higher consistency with historical preference distribution. 

Figure~\ref{fig:visualization} shows the results of consistency between user historical preference and user predicted preference with respect to top-$K$ categories ($K\in\{6,8,10,14,18\}$) under different filter combinations. We can find that the KL divergence is constantly decreasing after introducing IHF, UHNF and IHNF sequentially on the basis of the ILF in all settings of $K$, which shows that IHF, UHNF and IHNF are all beneficial to restore user's real interactive intentions by utilizing user/item unique characteristics and user and item high-order neighborhood information, making the predicted user preference as close as possible to user historical preference, thereby achieving accurate recommendation. Note that when $K$ increases, the decline magnitude of KL Divergence is decreasing, since several noisy categories at the tail affect the analysis of consistency.

\subsection{Efficiency Analysis}
We conduct the efficiency analysis on ML1M dataset by comparing the training time of \ours and other methods. For DGCF, LightGCN, SimpleX and UltraGCN which need back propagation to train the model, we accumulate the training time until we obtain the optimal validation accuracy. For GF-CF, PGSP and \ours, we directly calculate its training time, including the time for SVD. 

Figure~\ref{fig:eff_ana} shows the experimental results. From the results, we can find that compared with the GSP-based CF methods, \ours is comparable to GF-CF and PGSP. Compared with the GCN-based methods, \ours is 8X faster than that of UltraGCN, which is the most efficient GCN-based method among the baselines. Therefore, we can conclude that \ours is highly efficient.

\subsection{Sensitivity Analysis}

We conduct the sensitivity analysis on ML1M dataset to study how two important hyper-parameters affect the performance of \ours. The experimental results are shown in Figure~\ref{fig:sen_ana}, and we only show the results of NDCG@10 and NDCG@20 for better presentation of their trends, however, other metrics can draw the same conclusion.
\subsubsection{Number of High Frequency Components $p_1$.} Figure~\ref{fig:sen_ana} (a) shows that the accuracy of \ours first increases and then decreases as $p_1$ increases. This is because as $p_1$ increases, more unique characteristics are captured by \ours, thereby improving the accuracy of interaction prediction. However, noisy interactions also corresponds to high-frequency components, and excessive high-frequency components will also adopt the noisy interactions to model user preference, thereby affecting the accuracy of interaction prediction. 

\subsubsection{Order of Item High-order Neighborhood Filter $k_1$.} Figure~\ref{fig:sen_ana} (b) shows as $k_1$ increases, the performance of the model has significantly improved, as more and more information from neighbors of different distances of item to be used to model user preference, thus improving the accuracy of recommendation results. However, an increase in $k_1$ will cause $(\mathbf{I}-\mathbf{O}_I)^{k_1}$ to become dense, thereby increasing the computational complexity of \ours. Simultaneously, the  performance gain of \ours is also gradually decreasing as $k_1$ becomes larger. Therefore, selecting an appropriate $k_1$ is necessary to balance the accuracy and running efficiency of the model. For the order of User High-order Neighborhood Filter $k_2$, we do not report it due to the space limitation but can draw the same conclusion.

\section{Related Work}
\subsection{GCN-based Recommendation Algorithm}
Recommender system, which studies the interactions between users and items, plays an important role in various domains, such as education~\cite{education1, education2, education3}, medical~\cite{medical1, medical2, medical3} and e-commerce~\cite{ecommerce1, ecommerce2}. Formally, user interactions can be constructed as a graph, where nodes are users and items, and edges are interactions between users and items. Due to the fact that graph convolutional networks (GCNs) has powerful structural feature extraction ability~\cite{gin, strucgnn}, more researchers begin to design recommendation algorithms based on GCNs, resulting in the rapid development of GCN-based recommendation algorithms~\cite{pinsage, dgcf, sgl, graphcfs1, graphcfs2, graphcfs3, graphcfs4}. GCMC~\cite{gcmc} is an early GCN-based recommendation algorithm that  introduced a graph auto-encoder to reconstruct user historical interactions, so as to predict user future interactions. PinSage~\cite{pinsage} combined random walks and graph convolutions to generate embeddings of items that incorporated both graph structure and node feature information, and designed a novel training strategy that relied on hard training examples to improve robustness and convergence of the model. UltraGCN~\cite{ultragcn} proposed a simple yet effective GCN-based recommendation algorithm which resorted to approximate the limit of infinite-layer graph convolutions via a constraint loss and allowed for more appropriate edge weight assignments and flexible adjustment of the relative importance among different relationships. 

It should be noted that the depth of GNN is restricted due to the over-smoothing problem~\cite{oversmoothing1, oversmoothing2}, where nodes tend to have similar representations. LR-GCCF~\cite{lrgccf} introduced the skip connection~\cite{skipconnection} to alleviate the over-smoothing problem and made GNN deeper. IMP-GCN~\cite{imp-gcn} proposed to propagate information in the user sub-graphs instead of the whole interaction graph, so as to reduce the impact of noise or negative information and alleviate the over-smoothing problem to make personalized recommendation.

\subsection{GSP-based Recommendation Algorithm}
GSP-based recommendation algorithms~\cite{cfgsp, gfcf, pgsp} are receiving increasing attention from researchers due to its parameter-free characteristic, making them highly efficient and able to achieve decent performance. GF-CF~\cite{gfcf} developed a unified graph convolution-based framework and proposed a simple yet effective collaborative filtering method which integrated a linear filter and an ideal low-pass filter to make recommendation. PGSP~\cite{pgsp} proposed a mixed-frequency low-pass filter over the personalized graph signal to model user preference and predict user interactions. However, these methods neglect to utilize user/item unique characteristics and user and item high-order neighborhood information to model user preference, making the modeled user preference sub-optimal.

\section{Conclusion}
We propose a frequency-aware graph signal processing method (\ours) for collaborative filtering to fully utilize information in user interactions for user future interaction prediction. \ours consists of a Cascaded Filter Module---which is composed of an ideal high-pass filter and an ideal low-pass filter---to take both user/item unique characteristics and common characteristics into consideration for user preference modeling, and a Parallel Filter Module---which is composed of two low-pass filters---to fully utilize user and item high-order neighborhood information for user preference modeling. Extensive experiments demonstrate the superiority of our method from the perspectives of recommendation accuracy and training efficiency compared to existing GCN/GSP-based CF methods.

\bibliographystyle{ACM-Reference-Format}
\bibliography{main}

\end{document}